\newtheorem{theorem}{Theorem}
\newtheorem{lemma}[theorem]{Lemma}
\newtheorem{lemma*}{Lemma}
\newtheorem{claim}[theorem]{Claim}
\newtheorem{corollary}[theorem]{Corollary}
\newtheorem{definition}[theorem]{Definition}
\newcommand{\qedsymb}{\hfill{\rule{2mm}{2mm}}}  
\newenvironment{proof}[1][]{\begin{trivlist}  
  \item[\hspace{\labelsep}{\it\noindent Proof#1:\/}]}
  {\qedsymb\end{trivlist}}
\newcommand{\Fig}[1]{Fig.~\ref{#1}}
\newcommand{\Lem}[1]{Lemma~\ref{#1}}
\newcommand{\Cor}[1]{Corollary~\ref{#1}}
\newcommand{\Sec}[1]{Sec.~\ref{#1}}
\newcommand{\Ref}[1]{Ref.~[\onlinecite{#1}]}
\newcommand{\Thm}[1]{Theorem~\ref{#1}}
\newcommand{\cc}[1]{\cite{#1}}
\newcommand{\bigO}[1]{\ensuremath{\operatorname{O}\bigl(#1\bigr)}}
\newcommand{\bOmega}[1]{\ensuremath{\operatorname{\Omega}\bigl(#1\bigr)}}
\newcommand{\ket}[1]{|#1\rangle}
\newcommand{\bra}[1]{\langle#1|}
\newcommand{\norm}[1]{\|#1\|}
\newcommand{\bnorm}[1]{\big\|#1\big\|}
\newcommand{\EqDef}{:=}
\newcommand{\DL}{\mathop{\rm DL}\nolimits}
\newcommand{\Id}{\ensuremath{\mathbbm{1}}}
\newcommand{\eps}{\epsilon}
\newcommand{\spg}{\gamma}
\newcommand{\gs}{\Omega}
\newcommand{\oP}{\bar{P}}
\newcommand{\oQ}{\bar{Q}}
\newcommand{\oH}{\bar{H}}
\newcommand{\ospg}{\bar{\gamma}}
\newcommand{\oPi}{\bar{\Pi}}
\newcommand{\Pigs}{\Pi_{gs}}
\newcommand{\Vgs}{V_{gs}}
\begin{document}

\title{A simple proof of the detectability lemma and 
spectral gap amplification}

\author{Anurag Anshu}
\email{a0109169@u.nus.edu}
\affiliation{Centre for Quantum Technologies,
  National University of Singapore, Singapore}
\author{Itai Arad}
\email{arad.itai@fastmail.com}
\affiliation{Centre for Quantum Technologies, National University 
  of Singapore, Singapore}
\author{Thomas Vidick}
\email{vidick@cms.caltech.edu}
\affiliation{Department of Computing and Mathematical Sciences,
  California Institute of Technology, Pasadena, USA}
\date{\today} 

\begin{abstract}
  The detectability lemma is a useful tool for probing the structure
  of gapped ground states of frustration-free Hamiltonians of
  lattice spin models. The lemma provides an estimate on the error
  incurred by approximating the ground space projector with a
  product of local projectors. We provide a new, simpler proof for
  the detectability lemma which applies to an arbitrary ordering of
  the local projectors, and show that it is tight up to a constant
  factor. As an application, we show how the lemma can be combined
  with a strong converse by Gao to obtain local spectral gap
  amplification: We show that by coarse graining a local
  frustration-free Hamiltonian with a spectral gap $\spg>0$ to a
  length scale $\bigO{\spg^{-1/2}}$, one gets a Hamiltonian with an
  $\bOmega{1}$ spectral gap.
\end{abstract}

\maketitle

%%%%%%%%%%%%%%%%%%%%%%%%%%%%%%%%%%%%%%%%%%%%%%%%%%%%%%%%%%%%%%%%%
\section{Introduction}
\label{sec:intro}

In recent years our understanding of quantum many-body systems, and
in particular the properties of their ground states, has shown
considerable progress. Much of this understanding can be attributed
to the development of new technical tools for analyzing general
many-body quantum systems. A particularly powerful set of
techniques, pioneered by Hastings\cc{ref:Hastings2004-EXP}, uses 
Lieb-Robinson bounds\cc{ref:LRB72,ref:Nachtergaele2006-LR} together
with appropriate filtering functions to construct local
approximations to the action of the ground state projector. These
techniques were successfully leveraged to rigorously establish many
interesting properties of ground states such as exponential decay of
correlations in gapped models\cc{ref:Hastings2004-EXP,
ref:HK2006-EXP, ref:Nachtergaele2006-LR}, an area law for
one-dimensional (1D) gapped systems\cc{ref:Hastings2007-AL},
efficient classical simulation of adiabatic evolution of 1D gapped
systems\cc{Osborne2007-adiabatic1D, Hastings2009-adiabatic1D},
stability of topological order\cc{ref:BHM2010-TQO-stability,
ref:BH2011-TQO-stability}, classification of quantum
phases\cc{ref:CGW2010-classification}, and many more (see,
e.g.,~\Ref{ref:Hastings2010-locality} and references therein).

More recently, originating in an attempt to tackle some aspects of
the quantum PCP conjecture\cc{ref:AharonovAV13qpcp}, a new tool has
been introduced for the analysis of many-body local Hamiltonians,
known as the \emph{detectability lemma
(DL)}\cc{ref:AALV2009-gap-amp}. The DL has proven particularly
useful for studying the ground states of gapped,
\emph{frustration-free} spin systems on a
lattice\cc{ref:AharoAVZ2011-DL}. Examples of such systems include
the Affleck-Kennedy-Lieb-Tasaki (AKLT) model \cc{ref:AKLT87}, the
spin $1/2$ ferromagnetic XXZ chain\cc{ref:NK1997-XXZ} and Kitaev's
toric code \cc{ref:KitaevToricCode97-1, ref:KitaevToricCode97-2}.

Given a local Hamiltonian $H$ that is frustration free, the
detectability lemma operator $\DL(H)$ is defined as a product of the
local ground space projectors associated to each term in the
Hamiltonian, organized in layers (see \Fig{fig:layers} and
\Sec{sec:DL} for a precise statement). The DL operator leaves the
ground space of $H$ invariant while shrinking all excited states by
a factor of at least $1-\Delta$ for some $0<\Delta<1$.  The
detectability lemma establishes a lower bound on $\Delta$, thereby
placing an upper bound on the shrinking of any state orthogonal to
the ground space. Essentially, the lemma shows that $\Delta$ is at
least a constant times the spectral gap of $H$. 

Since the DL operator preserves the ground space and shrinks any
state orthogonal to it, it can be viewed as an approximation to the
ground state projector, with an error of $1-\Delta$. This allows one
to approximate the highly complex and possibly non-local ground
space projector of the full system by the simpler operator $\DL(H)$
(or a power of it). It provides a considerably simpler alternative
to more general constructions based on Lieb-Robinson bound and the
use of filtering functions (admittedly those constructions also
apply to frustrated systems). Many results that were proved for
general systems using these techniques, such as the 1D area law and
the exponential decay of correlations, can be proved in simpler way
for the case of frustration-free systems using the
DL~\cc{ref:AharoAVZ2011-DL}. In addition, the DL has found further
applications such as the analysis of
T-designs\cc{ref:BHH2012-Tdesign} and Gibbs
samplers\cc{ref:KB2014-Gibbs}, and an improvement to the original 1D
area law for frustration-free systems\cc{ref:ALV2012-AL}.

The original proof of the DL from \Ref{ref:AALV2009-gap-amp} used
the so-called XY decomposition and was limited to local Hamiltonians
in which the local terms are taken from a constant set.
Subsequently, a much simpler proof, which does not rely on the XY
decomposition and is free of the limitations of the first proof, was
introduced in \Ref{ref:AharoAVZ2011-DL}. In this paper we introduce
yet another proof of the DL, which is simpler than the proof of
\Ref{ref:AharoAVZ2011-DL}, provides a tighter bound on $1-\Delta$,
and is more general as it holds for an \emph{arbitrary} ordering of
the local projectors. This tighter form of the DL has already been
used in~\cite{ref:GH2015-DLAmp} to derive a quadratically improved
upper bound on the correlation length of gapped ground states of
frustration-free systems.

Recent work of Gao\cc{ref:gao2015quantum} on a quantum union bound
establishes a \emph{converse} to the DL that provides a lower bound
on the spectral gap of a frustration-free Hamiltonian $H$ as a
function of the spectral gap of $\DL(H)$.\footnote{A previous arXiv
version of this paper contained a proof for a slightly weaker
statement than Gao's.} Equivalently, Gao's result places an upper
bound on the parameter $\Delta$, or a lower bound on the shrinking
of excited states by $\DL(H)$ (see Lemma~\ref{lem:LD} for a precise
statement). Together with the detectability lemma, the two results
establish a form of duality between $H$ and $\DL(H)$, showing that
their spectral gaps are always within a constant factor from each
other. This converse to the DL has already been used for the purpose
of proving lower bounds on the spectral gap of frustration-free
Hamiltonians in forthcoming work on 1D area laws and efficient
algorithms\cc{ref:BigAL}. 

As an application, in the second part of this paper we show how a
combination of the DL and its converse can be used to prove that the
spectral gap of a local frustration-free Hamiltonian can be
amplified from $\spg>0$ to a constant by coarse-graining the
Hamiltonian to a length scale $\bigO{\spg^{-1/2}}$. A direct
application of both lemmas provides the result for a length scale
$\bigO{\spg^{-1}}$; we quadratically improve the dependence on
$\spg$ by employing a Chebyshev polynomial in a way analogous to
recent work of Gosset and Huang~\cite{ref:GH2015-DLAmp}.  

\paragraph*{Organization.} In \Sec{sec:DL} we state and prove the DL.
In \Sec{sec:gap} we give our application to spectral gap
amplification.

%%%%%%%%%%%%%%%%%%%%%%%%%%%%%%%%%%%%%%%%%%%%%%%%%%%%%%%%%%%%%%%%%
\section{The DL operator and frustration-free spin systems on a
lattice} \label{sec:setup} 

Throughout we use the ``big O'' notation, where $\bigO{f(x)}$
indicates any function $g$ such that there is a constant $C>0$,
$|g(x)|\leq C\cdot f(x)$ for all $x$ in the domain of $f$.
Similarly, $\bOmega{f(x)}$ denotes any function $g$ such that there
exists a constant $c>0$ such that $g(x) \geq c\cdot f(x)$ for all
$x$ in the domain of $f$.

We concentrate on \emph{frustration-free} spin systems on regular
lattices. Formally, we consider $n$ quantum spins with local
dimension $d$ that are positioned on the vertices of a regular
$D$-dimensional lattice with an underlying Hilbert space
$\mathcal{H} = (\mathbb{C}^d)^{\otimes n}$. On this lattice we
consider a $k$-local Hamiltonian system $H=\sum_i h_i$ where each
$h_i$ acts on at most $k$ neighboring spins of the lattice. It is
easy to see that in this setting every local term does not commute
with at most $g$ other local terms, where $g$ is a constant.
Moreover, the set of local terms can always be partitioned into $L$
subsets $T_1, T_2, \ldots, T_L$, called \emph{layers}, such that
each layer consists of non-overlapping local terms, which are
therefore pairwise commuting. Clearly, both $g$ and $L$ can be upper
bounded as functions of $k$ and $D$ [trivial bounds are $g\leq
k(2D)^{k-1}$ and $L\leq (2D)^{2k}$]; for clarity, here we treat them
as independent parameters. A canonical example is a spin chain over
$n$ spins with nearest-neighbor interactions
$H=\sum_{i=1}^{n-1}h_i$, where $h_i$ acts on spins $\{i,i+1\}$. Each
$h_i$ is non-commuting with at most $g=2$ neighbors, and the system
can be partitioned into $L=2$ layers, the \emph{odd} layer $T_{\text{odd}}
=\{ h_1, h_3, h_5, \ldots\}$ and the complementary \emph{even} layer
$T_{\text{even}}$. This decomposition is illustrated in \Fig{fig:layers}. 

\begin{figure}
  \centering
  \includegraphics[scale=0.9]{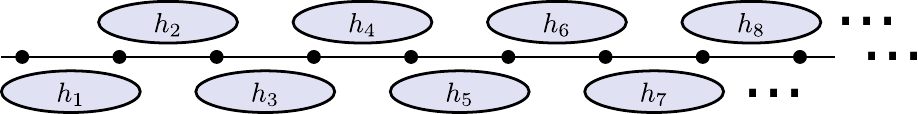} 
  \caption{Decomposing the local terms of a 1D Hamiltonian $H=\sum_i
  h_i$ with 2-local, nearest-neighbor interactions into two layers
  --- an even layer and an odd layer. \label{fig:layers} }
\end{figure}

By adding constant multiples of the identity to each $h_i$ we may
assume without loss of generality that their smallest eigenvalue is
$0$. Moreover, assuming that the norms of the $h_i$ are uniformly
bounded by a constant, we may scale the system and switch to
dimensionless units in which $\norm{h_i}\le 1$ and therefore $0\le
h_i \le \Id$.  We label the energy levels of $H$ by $\eps_0 < \eps_1
< \eps_2 \cdots$, where each level may correspond to more than one
eigenstate of $H$. The ground space of $H$ is denoted by $\Vgs$ and
the projector onto it by $\Pigs$. We let $\spg\EqDef\eps_1-\eps_0>0$
denote the \emph{spectral gap} of the system. 

We say that the system is \emph{frustration free} when every ground
state $\ket{\gs}\in\Vgs$ minimizes the energy of each local term
$h_i$ {separately}, i.e., $\bra{\gs}h_i\ket{\gs}=0$.  Notice that in
such case it necessarily holds that $h_i\ket{\gs}=0$ and hence every
ground state is a {common eigenstate} of all $h_i$. This property
strongly constrains the structure of frustration-free ground states
and makes their analysis much simpler in comparison with the general
frustrated case. 

When studying frustration-free ground states it is often convenient
to introduce an auxiliary Hamiltonian in which every $h_i$ is
replaced by a projector $Q_i$ whose null space coincides with the
null space of $h_i$. The auxiliary Hamiltonian $\hat{H}\EqDef \sum_i
Q_i$ and the original Hamiltonian $H=\sum_i h_i$ thus share the same
ground space. Moreover, since $0\le h_i \le \Id$, $0\le h_i \le
Q_i$, and $\hat{H}\ge H$. It follows that if $H$ is gapped, then so
is $\hat{H}$, with $\spg(\hat{H})\ge \spg(H)$. Note that in case the
original Hamiltonian $H = \sum_i a_i Q_i$ with $Q_i$ projectors and
$0\leq a_i\leq 1$, the effect of this transformation is simply to
set $\hat{H} = \sum_i Q_i$, a Hamiltonian with the same ground space
and a gap at least as large as that of $H$.  From here onwards in
order to keep the notation light we shall denote $\hat{H}$ by $H$,
or simply assume that $H$ itself is given as a sum of projectors,
$H=\sum_i Q_i$.

A useful approach for understanding the locality properties of the
ground space of $H$ consists in approximating its ground state
projector $\Pigs$ by an operator that possesses a more local
structure, and is therefore easier to work with. Such operators are
referred to as Approximate Ground State Projectors (AGSPs), and
various constructions have been used to establish properties of
gapped ground states such as exponential decay of
correlations\cc{ref:AharoAVZ2011-DL,ref:GH2015-DLAmp}, area
laws\cc{ref:ALV2012-AL, ref:AKLV2013-AL}, and local
reversibility\cc{ref:Kuwahara2015-LR}.  Frustration-free systems can
be given a very natural construction of AGSP, called the
\emph{detectability lemma operator} $\DL(H)$. To introduce this
operator, define the \emph{layer projector} $\Pi_\ell\EqDef
\prod_{i\in T_\ell} (\Id-Q_i)$ for every layer $\ell$. As $\Pi_\ell$
is a product of commuting projectors, it is by itself a projector
--- the projector onto the ground space of the $\ell$-th layer. Then
$\DL(H)$ is defined as follows. 

\begin{definition}[The detectability lemma operator]
\label{def:dl}
  Given a decomposition of the terms of a local Hamiltonian
  $H=\sum_i Q_i$ in $L$ layers $T_1,\ldots,T_L$ the detectability
  lemma operator of $H$ is defined as
  \begin{align}
  \label{def:DL-operator}
    \DL(H) \EqDef \Pi_L\cdots \Pi_1 
      =  \prod_{\ell=1}^L \prod_{i\in T_\ell} (\Id-Q_i).
  \end{align}
\end{definition}
It is easy to see that $\DL(H)$ is indeed an AGSP: by the
frustration-free assumption each $\Id-Q_i$ preserves the ground
space, hence $\DL(H)\Vgs = \Vgs$. Moreover, $\norm{\DL(H)}\le 1$,
since its a product of projectors, and $\norm{\DL(H)\ket{\psi}}=1$
if and only if $\ket{\psi}\in \Vgs$. Therefore, there exists some
$0<\Delta<1$ such that for every state $\ket{\psi^\perp}$ that is
perpendicular to the ground space, $\norm{\DL(H)\ket{\psi^\perp}}\le
1-\Delta$. It follows that $\norm{\Pigs-\DL(H)}\le 1-\Delta$.
Therefore the DL operator is an AGSP, whose quality is determined by
the parameter $\Delta$. Moreover, using again the fact that the
system is frustration-free, one can amplify the quality of
approximation by taking power of the DL operator:
$\norm{\Pigs-\DL^q(H)}\le (1-\Delta)^q$ for any $q\ge 1$. 

As an operator, $\DL^q(H)$ is an alternating product of layer
projectors. Pictorially, it can be visualized as a stack of layers,
much as a brick wall (see, e.g.,~\Fig{fig:coarse-grain}). One can
verify that the collection of projectors $\Id-Q_i$ appearing in
$\DL^q(H)$ that do not commute with a given local operator $B$ forms
a ``light cone'' centered at $B$. This observation is crucial for
understanding the effect of $B$ on the ground space, and is arguably
the most important way in which locality of the DL operator can be
leveraged.

We are left with the task of estimating the parameter $\Delta$. The
detectability lemma, introduced in the next section, provides a
lower bound on $\Delta$ (an upper bound on $1-\Delta$). The converse
to the lemma, Lemma~\ref{lem:LD}, provides an upper bound on
$\Delta$. Crucially, even though both bounds depend on $\spg$,
 and the bound from the DL also depends on $g$, both bounds
are independent of the system size.

%%%%%%%%%%%%%%%%%%%%%%%%%%%%%%%%%%%%%%%%%%%%%%%%%%%%%%%%%%%%%%%%%
\section{A simple proof of the detectability lemma}
\label{sec:DL}

The variant of the DL we are about to prove is more general that the
one from~\Ref{ref:AharoAVZ2011-DL} in that the projectors $Q_i$ are
not assumed to be local, nor placed on a fixed lattice; the order of
their product in $\DL(H)$ can be {arbitrary}. Luckily, the proof
also turns out to be simpler than the original proof.

\begin{lemma}[The detectability lemma (DL)]
\label{lem:detect} 

  Let $\{Q_1, \ldots, Q_m\}$ be a set of projectors and $H =
  \sum_{i=1}^m Q_i$.  Assume that each $Q_i$ commutes with all but
  $g$ others. Given a state $\ket{\psi}$, define $\ket{\phi}\EqDef
  \prod_{i=1}^m (\Id-Q_i)\ket{\psi}$, where the product is taken in
  any order, and let $\eps_\phi\EqDef
  \frac{1}{\norm{\phi}^2}\bra{\phi}H\ket{\phi}$ be its energy. Then 
  \begin{align}
    \bnorm{ \prod_{i=1}^m (\Id-Q_i)\ket{\psi}}^2 
      \le   \frac{1}{\eps_\phi/g^2 + 1} .
  \end{align}
\end{lemma}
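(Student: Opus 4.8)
The plan is to reduce the statement to a clean inequality between $\bra{\phi}H\ket{\phi}$ and $1-\norm{\phi}^2$, and then prove that inequality by a telescoping identity together with a light-cone bound on $Q_i\ket{\phi}$. Since $\prod_{i=1}^m(\Id-Q_i)$ is a product of orthogonal projectors, $\norm{\phi}\le\norm{\psi}$, so I may assume $\norm{\psi}=1$; unpacking the definition of $\eps_\phi$, the claim is then equivalent to
\begin{align}
  \bra{\phi}H\ket{\phi}\;\le\;g^2\bigl(1-\norm{\phi}^2\bigr).
\end{align}
I will write the left side as $\bra{\phi}H\ket{\phi}=\sum_{i=1}^m\norm{Q_i\phi}^2$ and the right side as a sum of ``defects'' obtained by telescoping.

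For the telescoping, relabel the projectors so that $\ket{\phi}=(\Id-Q_m)\cdots(\Id-Q_1)\ket{\psi}$ and set $A_j\EqDef(\Id-Q_j)\cdots(\Id-Q_1)$, with $A_0\EqDef\Id$, so $\ket{\phi}=A_m\ket{\psi}$. Since each $\Id-Q_j$ is an orthogonal projector, $\norm{A_{j-1}\psi}^2=\norm{A_j\psi}^2+\norm{Q_j A_{j-1}\psi}^2$; summing over $j$ gives $1-\norm{\phi}^2=\sum_{j=1}^m\norm{v_j}^2$, where $v_j\EqDef Q_j A_{j-1}\ket{\psi}$ is the defect created at step $j$.

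The crux is to bound $Q_i\ket{\phi}$ using only the defects $v_j$ of the (at most $g$) terms $Q_j$ with $j>i$ and $[Q_i,Q_j]\neq 0$. Writing $\ket{\phi}=L_i(\Id-Q_i)A_{i-1}\ket{\psi}$ with $L_i=\prod_{j>i}(\Id-Q_j)$, I would push $Q_i$ rightward through $L_i$: every factor commuting with $Q_i$ passes freely, and each of the at most $g$ non-commuting factors $\Id-Q_j$ peels off one error term via $Q_i(\Id-Q_j)=Q_i-Q_iQ_j$. The branch in which $Q_i$ survives all the way to the right ends with $Q_i$ hitting $(\Id-Q_i)A_{i-1}\ket{\psi}$ and therefore vanishes, so $Q_i\ket{\phi}=-\sum_{j}\, O_{ij}\, Q_j\,(\text{tail})\ket{\psi}$, the sum over $j>i$ with $[Q_i,Q_j]\neq 0$ and $\norm{O_{ij}}\le1$. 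The key bookkeeping point is that the tail multiplying $Q_j$ in the $j$-th error term is exactly the partial product $A_{j-1}$ applied to $\ket{\psi}$, so that term equals $O_{ij}\,v_j$; hence $\norm{Q_i\phi}\le\sum_{j:\,j>i,\,[Q_i,Q_j]\neq 0}\norm{v_j}$. I expect this step --- checking, for an \emph{arbitrary} ordering of the product, that the tails reassemble precisely into the $A_{j-1}$'s --- to be the main obstacle, and it is where the freedom in the ordering is actually used.

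The remaining step is counting. By Cauchy--Schwarz, and since at most $g$ indices $j$ satisfy $j>i$ and $[Q_i,Q_j]\neq 0$, we get $\norm{Q_i\phi}^2\le g\sum_{j:\,j>i,\,[Q_i,Q_j]\neq 0}\norm{v_j}^2$. Summing over $i$ and exchanging the order of summation, each $j$ is counted once for every $i<j$ with $[Q_i,Q_j]\neq 0$, again at most $g$ of them, so $\bra{\phi}H\ket{\phi}=\sum_{i=1}^m\norm{Q_i\phi}^2\le g^2\sum_{j=1}^m\norm{v_j}^2=g^2(1-\norm{\phi}^2)$. Feeding this back through $\bra{\phi}H\ket{\phi}=\eps_\phi\norm{\phi}^2$ and rearranging gives $\norm{\phi}^2\le(\eps_\phi/g^2+1)^{-1}$, as desired.
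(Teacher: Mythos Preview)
Your proof is correct and essentially identical to the paper's: both push $Q_i$ rightward through the product, peel off an error term $\norm{Q_j A_{j-1}\ket{\psi}}$ at each of the at most $g$ non-commuting factors (the continuing branch vanishes upon reaching $(\Id-Q_i)$), apply Cauchy--Schwarz and the double-counting bound to obtain the factor $g^2$, and finish via the telescoping identity $\sum_j\norm{Q_jA_{j-1}\ket{\psi}}^2\le 1-\norm{\phi}^2$. The step you flag as the main obstacle---that the tails are exactly $A_{j-1}\ket{\psi}$---is automatic when one argues in norm as the paper does, since commuting factors pass via $\norm{(\Id-Q_\ell)X}\le\norm{X}$ and non-commuting ones split via the triangle inequality applied to $Q_i(\Id-Q_j)=Q_i-Q_iQ_j$, so no ordering-dependent bookkeeping is needed.
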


By choosing the order of the projectors to coincide with that in
$\DL(H)$ (for any decomposition into layers), and observing that for
every state $\ket{\psi^\perp}$ orthogonal to the ground space it
holds that $\bra{\psi^\perp}H\ket{\psi^\perp}\ge \spg$, we obtain
the following immediate corollary. 

\begin{corollary}
\label{cor:DL}
  For any state $\ket{\psi^\perp}$ orthogonal to the ground space of
  $H$, 
  \begin{align}
    \norm{\DL(H)\ket{\psi^\perp}}^2 \le \frac{1}{\spg/g^2 + 1} .
  \end{align}
\end{corollary}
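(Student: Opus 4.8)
Corollary~\ref{cor:DL} is an immediate specialization of Lemma~\ref{lem:detect}, so the plan is to prove the lemma and then read the corollary off it. First I would fix the (arbitrary) product order and, after relabeling, write $A\EqDef\prod_{i=1}^m(\Id-Q_i)=(\Id-Q_m)\cdots(\Id-Q_1)$, together with the partial products $A_j\EqDef(\Id-Q_j)\cdots(\Id-Q_1)$, so $A_0=\Id$, $A_m=A$, and $\ket\phi=A\ket\psi$. Since each $\Id-Q_j$ is a projector, $A_j^\dagger A_j=A_{j-1}^\dagger(\Id-Q_j)A_{j-1}$, and telescoping over $j$ gives the identity $\Id-A^\dagger A=\sum_{j=1}^m A_{j-1}^\dagger Q_jA_{j-1}$; evaluating it in the unit vector $\ket\psi$ yields $1-\norm\phi^2=\sum_{j=1}^m\norm{Q_jA_{j-1}\ket\psi}^2$. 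As also $\bra\phi H\ket\phi=\sum_{i=1}^m\norm{Q_i\ket\phi}^2$, the bound in Lemma~\ref{lem:detect} is equivalent to $\bra\phi H\ket\phi\le g^2(1-\norm\phi^2)$, i.e.\ to
\begin{align}
  \sum_{i=1}^m\bnorm{Q_iA\ket\psi}^2\ \le\ g^2\sum_{j=1}^m\bnorm{Q_jA_{j-1}\ket\psi}^2 .
\end{align}

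The heart of the matter will be a per-$i$ estimate: for each $i$ I would show $\norm{Q_iA\ket\psi}\le\sum_{j\in J_i}\norm{Q_jA_{j-1}\ket\psi}$, where $J_i\EqDef\{\,j>i:[Q_i,Q_j]\neq0\,\}$ has $|J_i|\le g$. The idea is to push $Q_i$ rightward through $A$: it commutes past every factor $\Id-Q_j$ with $[Q_i,Q_j]=0$. If it reaches the factor $\Id-Q_i$ it meets $Q_i(\Id-Q_i)=0$ and the term vanishes; otherwise it stalls just left of $\Id-Q_{j_1}$ with $j_1=\max J_i$. The prefix it has passed is a product of projectors (norm $\le1$) and the suffix from $\Id-Q_{j_1}$ onwards is exactly $A_{j_1}$, so $\norm{Q_iA\ket\psi}\le\norm{Q_iA_{j_1}\ket\psi}$; then $Q_iA_{j_1}=Q_i(\Id-Q_{j_1})A_{j_1-1}=Q_iA_{j_1-1}-Q_iQ_{j_1}A_{j_1-1}$ gives $\norm{Q_iA_{j_1}\ket\psi}\le\norm{Q_iA_{j_1-1}\ket\psi}+\norm{Q_{j_1}A_{j_1-1}\ket\psi}$, whose last term is of the wanted form. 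Recursing on $\norm{Q_iA_{j_1-1}\ket\psi}$ (its index still lies in $[\,i,m\,]$), the stall position strictly decreases and walks through $j_1>j_2>\cdots$, all in $J_i$, so after at most $|J_i|\le g$ rounds $J_i$ is exhausted, the residual term vanishes, and the per-$i$ estimate follows.

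To close, Cauchy--Schwarz converts the per-$i$ estimate into $\norm{Q_iA\ket\psi}^2\le g\sum_{j\in J_i}\norm{Q_jA_{j-1}\ket\psi}^2$; summing over $i$ and exchanging the order of summation, the term $\norm{Q_jA_{j-1}\ket\psi}^2$ is counted once per index $i<j$ with $[Q_i,Q_j]\neq0$, and there are at most $g$ of those, which gives the displayed inequality and hence Lemma~\ref{lem:detect}. Corollary~\ref{cor:DL} then follows by taking the order in the lemma to be that of $\DL(H)$ applied to $\ket{\psi^\perp}$: each $\Id-Q_i$ commutes with $\Pigs$ (frustration-freeness plus self-adjointness of $\Pigs$), hence so does $\DL(H)$, so $\ket\phi=\DL(H)\ket{\psi^\perp}$ is again orthogonal to $\Vgs$ and $\eps_\phi\ge\spg$; since $x\mapsto1/(x/g^2+1)$ is decreasing, plugging $\eps_\phi\ge\spg$ into the lemma finishes.

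The step I expect to be the main obstacle is the per-$i$ sliding/peeling recursion: one must check that it really terminates after at most $g$ rounds, that every intermediate index stays in the admissible range $[\,i,m\,]$ so the residual term genuinely vanishes at the end, and --- for the final bound --- that the \emph{same} constant $g$ controls the receiving side, so the two counts multiply to exactly $g^2$. It is worth stressing that nothing in this argument uses any feature of the chosen product order, which is precisely what makes the lemma hold for an arbitrary ordering of the $Q_i$.
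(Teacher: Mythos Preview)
Your proof is correct and follows essentially the same route as the paper: slide $Q_i$ through the product, split via the triangle inequality at each non-commuting factor, apply Cauchy--Schwarz on the at-most-$g$ terms, and double-count over $i$ using the telescoping identity $1-\norm\phi^2=\sum_j\norm{Q_jA_{j-1}\ket\psi}^2$. Your use of $J_i=\{j>i:[Q_i,Q_j]\neq0\}$ is slightly sharper than the paper's $N_i$, and you make explicit the point that $\ket\phi=\DL(H)\ket{\psi^\perp}\in\Vgs^\perp$ (hence $\eps_\phi\ge\spg$), which the paper leaves implicit.
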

In light of the discussion in the Introduction, we see that the DL
implies that $1-\Delta \le \frac{1}{\sqrt{\spg/g^2 + 1}}$, or,
equivalently, $\Delta\ge 1-\frac{1}{\sqrt{\spg/g^2 + 1}}\ge
\spg/(4g^2)$, where the second inequality follows from the fact that
$\spg/g^2<1$. (To see this, note that a state of energy at most
$g+1<g^2$ and orthogonal to the ground space can always be
constructed by starting from any ground state and replacing the
state of the spins associated with an arbitrary local term $h_i$
with a local state orthogonal to the ground space of $h_i$.)

We now turn to the proof of the DL; after the proof we
give a simple example showing that the dependence on $g$ in the
bound provided by the lemma is necessary.

\begin{proof}[ of Lemma~\ref{lem:detect}]
  We start by considering
  \begin{align*}
    \bra{\phi}H\ket{\phi}=\sum_{i=1}^m\bra{\phi}Q_i\ket{\phi}
      =\sum_{i=1}^m \norm{Q_i\ket{\phi}}^2.
  \end{align*}
  To bound $\norm{Q_i\ket{\phi}}$ we write it as
  $\norm{Q_i(\Id-Q_m)\cdots(\Id-Q_1)\ket{\psi}}$ and try to move
  $Q_i$ to the right until it hits $(\Id-Q_i)$ and vanishes. Let
  $N_i$ denote the subset of indices of projectors that do not
  commute with $Q_i$. Whenever $j\in N_i$, we use the triangle
  inequality to write
  \begin{align*}
    &\norm{Q_i(\Id-Q_j)\cdot(\Id-Q_{j-1})\cdots(\Id-Q_1)\ket{\psi}} \\
    &\le \norm{Q_i(\Id-Q_{j-1})\cdots(\Id-Q_1)\ket{\psi}} \\
    &\ \ +  \norm{Q_iQ_j(\Id-Q_{j-1})\cdots(\Id-Q_1)\ket{\psi}}.
  \end{align*}
  Therefore,
  \begin{align*}
    \norm{Q_i\ket{\phi}}
     \le \sum_{j\in N_i}
       \norm{Q_j(\Id-Q_{j-1})\cdots(\Id-Q_1)\ket{\psi}},
  \end{align*}
  where we also used $\norm{Q_i}\le 1$. Since $|N_i|\le g$, we get
  \begin{align*}
    \norm{Q_i\ket{\phi}}^2
     \le g\sum_{j\in N_i}
       \norm{Q_j(\Id-Q_{j-1})\cdots(\Id-Q_1)\ket{\psi}}^2.
  \end{align*}
  Summing over $i=1,\ldots,m$, each term
  $\norm{Q_j(\Id-Q_{j-1})\cdots(\Id-Q_1)\ket{\psi}}^2$ appears at
  most $g$ time because there are at most $g$ projectors $Q_i$ that
  do not commute with $Q_j$. Thus
  \begin{align*}
    \bra{\phi}H\ket{\phi} 
      &= \sum_i \bra{\phi}Q_i\ket{\phi} 
      = \sum_i\norm{Q_i\ket{\phi}}^2 \\
      &\le g^2\sum_{j=2}^m \norm{Q_j(\Id-Q_{j-1})
        \cdots(\Id-Q_1)\ket{\psi}}^2\\
      &= g^2\big[\norm{(\Id-Q_1)\ket{\psi}}^2 \\
      &\hskip1.5cm - \norm{(\Id-Q_m)\cdots(\Id-Q_1)\ket{\psi}}^2\big]\\
      &\le g^2(1-\norm{\phi}^2) ,
  \end{align*}
  where the third line follows from a telescopic sum. Writing
  $\bra{\phi}H\ket{\phi}=\norm{\phi}^2\eps_\phi$ and re-arranging
  terms proves the lemma.
\end{proof}

We end this section with a simple example showing that the
dependence on $g$ in the bound of the DL is necessary. The idea is
to consider $g$ projection operators in two dimensions, each making
a small angle $\approx \eps$ with the next one. Sequentially
applying these projections will reduce the squared norm of a certain
state by $\approx g\eps^2$, but the final state will be sufficiently
far from most of the projection operators for its energy to be
$\Omega(g^3\eps^2)$. 

We proceed with the construction. Let $\eps>0$ and $g$ a positive
integer. Let $\ket{\psi} = \ket{0} \in\mathbb{C}^2$ and for
$i\in\{1,\ldots,g\}$ let $Q_i = \ket{\varphi_i}\bra{\varphi_i}$,
where we defined $\ket{\varphi_i} = \sin \eps_i \ket{0}- \cos \eps_i
\ket{1}$ and $\eps_i = i\eps$. Let $\ket{\varphi_i^\perp} = \cos
\eps_i \ket{0}+ \sin \eps_i \ket{1}$. Applying the sequence of projections 
$(\Id-Q_1)\to (\Id-Q_2)\to \ldots\to (\Id-Q_g)$ to $\ket{\psi}$, we
obtain (up to normalization) the states $\ket{\psi_1^\perp} \to
\ket{\psi_2^\perp} \to \ldots \to \ket{\psi_g^\perp}$. To estimate
the norm of the final state, note that
\begin{align*}
  \bra{\varphi_i^\perp}\varphi_{i+1}^\perp\rangle 
    &= \cos \eps_i \cos \eps_{i+1} + \sin \eps_i \sin \eps_{i+1} \\
      &=\cos(\eps_{i+1}-\eps_i) = \cos(\eps) , 
\end{align*}
so $(\Id-Q_g)\cdots(\Id-Q_1)\ket{\psi} = (\cos \eps)^g
\ket{\varphi_g^\perp}$, with squared norm 
\begin{align}
\nonumber
  \bnorm{(1-Q_g)\cdots(1-Q_1) \ket{\psi}}^2 
    &= \cos^{2g} \eps \geq \Big(1-\frac{\eps^2}{2}\Big)^{2g} \\
      &\geq 1-2g\,\eps^2 
\label{eq:norm}
\end{align}
for small enough $\eps$. To estimate the
energy of $\ket{\psi_g^\perp}$, note that for any $i$,
\begin{align*}
  \bnorm{Q_i \ket{\psi_g^\perp}}^2 = \sin^2 (\eps_i-\eps_g) 
    \ge \frac{(g-i)^2\eps^2}{2}
\end{align*}
for small enough $\eps$, so that 
\begin{align}
\label{eq:energy}
  \sum_{i=1}^g\bnorm{Q_i \ket{\psi_g^\perp}}^2  
    &\geq \frac{1}{2}\big[ (g-1)^2+ \cdots+ 2^2 +1\big]\eps^2 \\
    &= \frac{(g-1)(g)(2g-1)}{12}\eps^2 . \nonumber
\end{align}
Combining~\eqref{eq:norm} and~\eqref{eq:energy}, for large $g$ and
small enough $\eps$, 
\begin{align*}
  \sum_{i=1}^g  \norm{Q_i \ket{\psi_g^\perp}}^2 
    \geq \frac{(g-1)^2}{12} \Big[1-  \bnorm{(1-Q_g)
      \cdots(1-Q_1) \ket{\psi}}^2\Big] ,
\end{align*}
matching the bound from \Lem{lem:detect} up to constant
factors. 

%%%%%%%%%%%%%%%%%%%%%%%%%%%%%%%%%%%%%%%%%%%%%%%%%%%%%%%%%%%%%%%%%%
\section{Spectral gap amplification}
\label{sec:gap}

In this section we show how a simple combination of the DL and its
converse\cc{ref:gao2015quantum} can be used to prove that the
spectral gap of a frustration-free Hamiltonian made from projectors
can be amplified from any $\spg>0$ to a constant by coarse graining
the Hamiltonian to a length scale of $\bigO{\spg^{-1/2}}$. Our proof
employs a recent ``trick'' by Gosset and Huang\cc{ref:GH2015-DLAmp}
to boost the effect of $\DL^q(H)$ by using a Chebyshev polynomial.
This reduces the length scale of the required coarse graining from
$O(\spg^{-1})$ to $O(\spg^{-1/2})$.

For the sake of clarity we present the result for a nearest-neighbor
Hamiltonian defined on a line of particles; extension to
higher-dimensional lattices is straightforward. Let $H=\sum_i Q_i$
be a nearest-neighbor frustration-free Hamiltonian acting on a line
of $n$ particles, where each $Q_i$ is a projector acting on sites
$\{i,i+1\}$.

We first restate a result by Gao, Theorem~1~1.b from
\Ref{ref:gao2015quantum}, interpreted in our context as a converse
to the DL:
\begin{lemma}[Converse of the detectability lemma]
\label{lem:LD} 
  Let $H=\sum_i Q_i$ where the $Q_i$ are projectors
  given in arbitrary order. Then for every state
  $\ket{\psi}$,
  \begin{align}
    \norm{\prod_i (\Id-Q_i)\ket{\psi}}^2 
      \ge 1- 4\bra{\psi}H\ket{\psi} .
  \end{align}  
\end{lemma}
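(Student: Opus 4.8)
The statement is the ``quantum union bound'' for a sequence of projective measurements; the plan is to reprove it by a short \emph{amortized} gentle-measurement argument (equivalently, it is literally Theorem~1~1.b of Gao~\cite{ref:gao2015quantum} and may just be quoted). Abbreviate $P_i\EqDef\Id-Q_i$, write $A\EqDef\prod_i P_i$ for the ordered product occurring in the claim, and put $p\EqDef 1-\norm{A\ket{\psi}}^2$; since $\norm{\ket{\psi}}=1$ the goal is $p\le 4\bra{\psi}H\ket{\psi}$. The first step is to linearize $p$ through the overlap $\braket{\psi}{A\psi}$: from $0\le\norm{\ket{\psi}-A\ket{\psi}}^2 = 1-2\,\mathrm{Re}\braket{\psi}{A\psi}+\norm{A\ket{\psi}}^2 = 2-p-2\,\mathrm{Re}\braket{\psi}{A\psi}$ one gets $p\le 2\bigl(1-\mathrm{Re}\braket{\psi}{A\psi}\bigr)$, so it suffices to bound $1-\mathrm{Re}\braket{\psi}{A\psi}$.

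The second step telescopes along the partial products $\ket{v_0}\EqDef\ket{\psi}$, $\ket{v_j}\EqDef P_j\cdots P_1\ket{\psi}$, for which $\ket{v_m}=A\ket{\psi}$ and $\ket{v_j}-\ket{v_{j-1}}=-Q_j\ket{v_{j-1}}$. Using $Q_j=Q_j^\dagger=Q_j^2$ (so that $\braket{\psi}{Q_j v_{j-1}}=\braket{Q_j\psi}{Q_j v_{j-1}}$) and summing over $j$ gives $1-\mathrm{Re}\braket{\psi}{A\psi}=\mathrm{Re}\sum_{j=1}^m\braket{Q_j\psi}{Q_j v_{j-1}}$, while the orthogonal decomposition $\ket{v_{j-1}}=\ket{v_j}+Q_j\ket{v_{j-1}}$ telescopes the norms into $p=1-\norm{A\ket{\psi}}^2=\sum_{j=1}^m\norm{Q_j\ket{v_{j-1}}}^2$. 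The third step is two applications of Cauchy--Schwarz: termwise, $\mathrm{Re}\braket{Q_j\psi}{Q_j v_{j-1}}\le\norm{Q_j\ket{\psi}}\,\norm{Q_j\ket{v_{j-1}}}$, and then on the sum, $\sum_j\norm{Q_j\ket{\psi}}\,\norm{Q_j\ket{v_{j-1}}}\le\bigl(\sum_j\norm{Q_j\ket{\psi}}^2\bigr)^{1/2}\bigl(\sum_j\norm{Q_j\ket{v_{j-1}}}^2\bigr)^{1/2}=\sqrt{\bra{\psi}H\ket{\psi}}\,\sqrt{p}$, where I used $\sum_j\norm{Q_j\ket{\psi}}^2=\sum_j\bra{\psi}Q_j\ket{\psi}=\bra{\psi}H\ket{\psi}$. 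Chaining with the first step, $p\le 2\sqrt{\bra{\psi}H\ket{\psi}}\,\sqrt{p}$, and dividing by $\sqrt p$ (the case $p=0$ being trivial) yields $p\le 4\bra{\psi}H\ket{\psi}$, which is the lemma.

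The step I expect to matter most — and the reason one cannot simply invoke the one-shot gentle-measurement lemma $m$ times — is the \emph{order} in which the two Cauchy--Schwarz applications are taken. Estimating $\norm{\ket{v_{j-1}}-\ket{v_j}}$ one step at a time and adding the estimates would give a bound whose constant grows with the number $m$ of projectors (a genuine union-bound loss); the trick is to first collect all the first-order deviations $\braket{Q_j\psi}{Q_j v_{j-1}}$ into a single sum and only afterwards pass to the $\ell^2$ norm, so that the resulting quadratic inequality $p\le 2\sqrt{\bra{\psi}H\ket{\psi}}\,\sqrt p$ self-improves to the size-independent linear bound with the sharp constant $4$. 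It is worth noting that the argument uses no commutation or geometric relation among the $Q_i$ at all, which is exactly why the bound is insensitive to the order of the product $\prod_i(\Id-Q_i)$, as the lemma asserts.
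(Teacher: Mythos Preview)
Your proof is correct. Note, however, that the paper does not give its own proof of this lemma: it merely restates Theorem~1~1.b of Gao~\cite{ref:gao2015quantum} and uses it as a black box. So there is no ``paper's proof'' to compare against beyond the citation itself, and your parenthetical remark that the statement may simply be quoted is exactly what the paper does.

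That said, the argument you supply is precisely the amortized gentle-measurement/quantum-union-bound proof one finds in Gao's paper: linearize $p=1-\norm{A\ket{\psi}}^2$ via the overlap $\mathrm{Re}\braket{\psi}{A\psi}$, telescope $\Id-A$ into $\sum_j Q_j\ket{v_{j-1}}$, use the Pythagorean identity $\norm{v_{j-1}}^2=\norm{v_j}^2+\norm{Q_j v_{j-1}}^2$ to recognize the telescoped norm sum as $p$ itself, and finish with the two Cauchy--Schwarz steps to obtain the self-improving inequality $p\le 2\sqrt{\bra{\psi}H\ket{\psi}}\sqrt{p}$. Your commentary on why the order of the two Cauchy--Schwarz applications matters (and why a naive iteration of the one-shot gentle-measurement lemma would lose a factor of $m$) is accurate and is indeed the key idea.
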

Gao's result shows in particular that for every state
$\norm{\DL(H)\ket{\psi}}^2\ge 1-4\bra{\psi}H\ket{\psi}$. From the
discussion in the Introduction we see that this establishes that
$1-\Delta\ge \sqrt{1-4\spg}$, which implies $\Delta\le 4\spg$.
Together with the DL, it therefore shows the following relation
between the spectral gap of $H$ and that of $\DL(H)$:
\begin{align}
  \frac{\spg}{4g^2} \le \Delta \le 4\spg .
\end{align}
Moreover, up to the factors of $4$, both inequalities are tight: for
the first this is shown by the example described at the end of the
previous section, and the second is trivial.

We now turn to the definition of the coarse-grained Hamiltonian. For
this, fix an even integer $r\ge 2$ and group particles in groups of
$r$ neighboring particles. Define $S_1 = \{1,\ldots,r\}$, $S_2 =
\{r/2+1,\ldots, r/2+r\}$, and more generally $S_{\alpha} =
\{(\alpha-1) r/2 +1,\ldots,(\alpha-1) r/2+r \}$ for $\alpha\geq 1$
(see \Fig{fig:grouping} for an illustration). For each $\alpha$ let
$\oP_\alpha$ be the projector on the common ground space of all
local terms $Q_i$ that act exclusively on particles in $S_\alpha$,
and let $\oQ_\alpha\EqDef \Id-\oP_\alpha$. The coarse-grained
Hamiltonian is given by
\begin{align}
\label{def:oH}
  \bar{H} \EqDef \sum_\alpha \oQ_\alpha .
\end{align}

\begin{figure}
  \centering
  \includegraphics[scale=1]{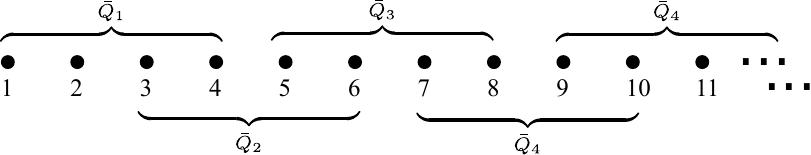}
  \caption{Grouping $r=4$ neighboring particles to subsets in order
  to define the coarse-grained Hamiltonian $\oH$. \label{fig:grouping}}
\end{figure}

Clearly, any ground state of $H$ is a ground state of $\oH$, so that
$\oH$ is frustration free. Conversely, any ground state of $\oH$ is
a ground state of $H$ as well, as for any $Q_i$ there is at least
one set $S_\alpha$ which contains both particles it acts on, so that
$\oQ_\alpha\ket{\psi} = 0 \implies Q_i\ket{\psi}=0$. The following
theorem gives a lower bound on the spectral gap of $\oH$.

\begin{theorem}
\label{thm:amp}
  The spectral gap of $\oH$ is at least $\frac{1}{4} -
  e^{-\frac{1}{2}(r-4)\sqrt{\spg/2}}$.
\end{theorem}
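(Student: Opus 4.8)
The plan is to lower-bound the spectral gap of $\oH$ by exhibiting a sufficiently good AGSP for $\oH$ and then invoking Gao's converse (Lemma~\ref{lem:LD}) to convert ``shrinking of excited states by the AGSP'' back into ``spectral gap of $\oH$''. More precisely, I would argue contrapositively: take any state $\ket{\psi}$ orthogonal to the ground space $\Vgs$ of $\oH$ (equivalently, of $H$), and show that $\bra{\psi}\oH\ket{\psi}$ cannot be too small, by showing that if it were, then some natural AGSP built from the $\oP_\alpha$ would fail to shrink $\ket{\psi}$ enough, contradicting the DL applied on the coarse-grained lattice. The key point is that the coarse-grained terms $\oP_\alpha$ overlap in a bounded way — each $S_\alpha$ overlaps only $S_{\alpha-1}$ and $S_{\alpha+1}$ — so the coarse-grained system has commutation degree $g=\bigO{1}$ and splits into two layers (even $\alpha$, odd $\alpha$), and $\DL(\oH) = \oPi_{\text{even}}\oPi_{\text{odd}}$ is a legitimate DL operator for $\oH$ in the sense of Definition~\ref{def:dl}.

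First I would set up the AGSP: consider $\DL^q(\oH)$ for a suitable power $q$, and apply Corollary~\ref{cor:DL}/Lemma~\ref{lem:detect} to $\oH$ to get $\norm{\DL^q(\oH)\ket{\psi^\perp}}^2 \le (\spg(\oH)/g^2+1)^{-q}$ for $\ket{\psi^\perp}\perp\Vgs$; in parallel, Lemma~\ref{lem:LD} applied to $\oH$ reads $\norm{\DL(\oH)\ket{\psi}}^2 \ge 1-4\bra{\psi}\oH\ket{\psi}$, so $\Delta(\oH)\le 4\spg(\oH)$. That alone only relates $\spg(\oH)$ to $\Delta(\oH)$; to get an \emph{absolute} constant I need to feed in information about the original gap $\spg$. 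The mechanism is that $\DL^q(H)$, restricted to act only within the blocks $S_\alpha$, is an excellent AGSP for $\Vgs$ \emph{because of the original gap}: the DL on the fine lattice gives $\norm{\DL^q(H)\ket{\psi^\perp}}^2\le(\spg/g^2+1)^{-q}$, and with the Gosset–Huang Chebyshev trick one boosts this to $\le e^{-2q\sqrt{\spg}/\ldots}$-type decay while keeping the operator supported on overlapping blocks of width $\bigO{q}$. Choosing $q\sim r$ makes $\DL^q(H)$ (suitably Chebyshev-amplified) into an operator $A$ with (i) $A\Pigs=\Pigs$, (ii) $\norm{A\ket{\psi^\perp}}^2\le e^{-\Omega(r\sqrt{\spg})}$, and (iii) $A$ a product of block operators each acting within some $S_\alpha$ — hence $A$ is annihilated, block by block, by the $\oQ_\alpha=\Id-\oP_\alpha$ structure, so $\oP_\alpha A = A$ in the relevant sense, which lets one compare $\bra{\psi}\oH\ket{\psi}$ to $1-\norm{A\ket{\psi}}^2$.

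Concretely, the chain of inequalities I expect to assemble is: for $\ket{\psi}\perp\Vgs$, write $1-\norm{A\ket{\psi}}^2 \le 4\bra{\psi}\oH\ket{\psi}$ by an application of Gao's bound to a Hamiltonian whose terms are $\Id$ minus the block projectors that $A$ respects (this is where one must check that $A$ is literally of the form $\prod(\Id-\tilde Q)$ or can be dominated by such, so that Lemma~\ref{lem:LD} applies), while on the other hand $\norm{A\ket{\psi}}^2\le e^{-(r-4)\sqrt{\spg/2}}$ from the amplified fine-lattice DL bound with the exponent coming out as in Gosset–Huang. Combining, $\bra{\psi}\oH\ket{\psi}\ge \tfrac14\bigl(1-e^{-(r-4)\sqrt{\spg/2}}\bigr) = \tfrac14 - \tfrac14 e^{-\frac12(r-4)\sqrt{\spg/2}}$, and absorbing the harmless $\tfrac14$ into the exponential's prefactor (or tracking constants more carefully) gives exactly the claimed $\tfrac14 - e^{-\frac12(r-4)\sqrt{\spg/2}}$. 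Since this holds for every $\ket{\psi}\perp\Vgs$ and $\oH$ is frustration-free with ground-state energy $0$, this lower bounds $\spg(\oH)$.

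The main obstacle will be step three's bookkeeping: getting the \emph{right} constant in the exponent, namely $\tfrac12(r-4)\sqrt{\spg/2}$. This requires (a) correctly defining the Chebyshev polynomial $\mathcal{C}$ of degree $\sim r/2$ applied to $\DL(H)$ (or to $\Id - \DL(H)^\dagger\DL(H)$, following Gosset–Huang) so that it maps eigenvalue $1$ to $1$ and is exponentially small on the interval $[0,1-\spg/g^2]$ or the relevant spectral window, (b) verifying that $\mathcal{C}(\DL(H))$ is still supported within blocks of width at most $r$ — this is the ``light-cone'' observation mentioned after Definition~\ref{def:dl}, and it is what forces the group size to scale like the polynomial degree, hence like $\spg^{-1/2}$ rather than $\spg^{-1}$ — and (c) checking that this block-supported operator can legitimately be plugged into Gao's converse, i.e. that it is (or is dominated by) a product of projectors $\Id - \oQ_\alpha$ reordered appropriately. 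The $-4$ shift in $(r-4)$ presumably comes from boundary blocks $S_1$ and the overlap-by-$r/2$ structure eating a constant number of blocks' worth of width; pinning that down precisely is the finicky part, but it is routine once the AGSP and its support are in hand.
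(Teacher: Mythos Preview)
Your overall plan is right — Gao's converse on $\oH$ plus the fine-lattice DL plus the Chebyshev boost are exactly the three ingredients — but the way you propose to glue them together has a real gap, and it is precisely the point you flag as ``where one must check that $A$ is literally of the form $\prod(\Id-\tilde Q)$ or can be dominated by such.'' It is not. The Chebyshev-amplified operator $P_q\bigl(\DL(H)^\dagger\DL(H)\bigr)$ is a polynomial in a positive operator on the whole chain; it is neither a product of projectors nor block-local in the sense you suggest (``$\mathcal{C}(\DL(H))$ is still supported within blocks of width at most $r$'' is false — each monomial has a different light-cone, and their sum does not factor over the $S_\alpha$). So \Lem{lem:LD} cannot be applied to it, and the inequality $1-\norm{A\ket{\psi}}^2 \le 4\bra{\psi}\oH\ket{\psi}$ you want does not follow from anything you have set up. Likewise, ``$\oP_\alpha A = A$'' would force the range of $A$ into $\Vgs$, which is much stronger than what any AGSP satisfies.

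The paper's fix is not to make $A$ compatible with Gao, but to apply Gao where it is already legal: to $\DL(\oH)=\oPi_{\text{even}}\oPi_{\text{odd}}$ itself, which \emph{is} a product of projectors. This gives $\ospg \ge \tfrac14 - \tfrac14\max_{\ket{\psi^\perp}}\norm{\DL(\oH)\ket{\psi^\perp}}^2$. The bridge to the fine DL is then an \emph{identity}, not an inequality: because each coarse projector $\oP_\alpha$ absorbs any fine projector $P_i=\Id-Q_i$ supported inside $S_\alpha$ (i.e.\ $P_i\oP_\alpha=\oP_\alpha$), one can pull matching light-cones out of $\oPi_{\text{odd}}$ to the left and $\oPi_{\text{even}}$ to the right to obtain
\[
  \oPi_{\text{even}}\,\oPi_{\text{odd}}
  \;=\;
  \oPi_{\text{even}}\,\bigl(\Pi_{\text{even}}\Pi_{\text{odd}}\Pi_{\text{even}}\bigr)^q\,\oPi_{\text{odd}}
  \;=\;
  \oPi_{\text{even}}\,P_q\!\bigl(\DL(H)^\dagger\DL(H)\bigr)\,\oPi_{\text{odd}}
\]
for any polynomial $P_q$ of degree $q\le\lfloor r/4\rfloor$ with $P_q(1)=1$. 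Now the Chebyshev bound applies to the middle factor, and $\norm{\oPi_{\text{even}}}\le 1$ together with $\oPi_{\text{odd}}\Vgs^\perp\subset\Vgs^\perp$ finishes the job. This ``pull out a light-cone from the coarse projector'' step is the idea your sketch is missing.

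One minor correction: the $(r-4)$ in the exponent has nothing to do with boundary blocks. It comes from $q=\lfloor r/4\rfloor\ge r/4-1$ combined with the Chebyshev estimate $|P_q(x)|\le 2e^{-2q\sqrt{1-h}}$ and $1-h\ge\spg/8$, which after squaring gives $\tfrac14\cdot\bigl(2e^{-q\sqrt{\spg/2}}\bigr)^2 = e^{-2q\sqrt{\spg/2}}\le e^{-\frac12(r-4)\sqrt{\spg/2}}$.
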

Before proving the theorem, we note, following
\Ref{ref:GH2015-DLAmp}, that \Thm{thm:amp} is optimal in the sense
that in general one cannot hope to amplify the gap of a
frustration-free system to a constant by coarse graining into groups
of $r=\bigO{\spg^{-\lambda}}$ particles with $\lambda<1/2$. Indeed,
as was shown in~\Ref{ref:GH2015-DLAmp}, there exists a
frustration-free 1D Hamiltonian (the XXZ model with kink boundary
conditions) for which the correlation length is
$\xi=\bOmega{\spg^{-1/2}}$. On the other hand, as shown by
Hastings\cc{ref:Hastings2004-EXP}, the correlation length of every
$r$-local Hamiltonian chain with a constant spectral gap is
$\xi=O(r)$. Hence, coarse graining the XXZ model to a length scale
$r$ that produces a constant gap necessarily requires
$r=\bOmega{\spg^{-1/2}}$.

\begin{proof}
  Let $\Vgs$ be the ground space of $H$, and let $\Vgs^\perp$ be its
  orthogonal subspace. As argued above, these are also the 
  corresponding subspaces of $\oH$. Let
  \begin{align*}
    \oPi_{\text{odd}}\EqDef(\Id-\oQ_1)\cdot(\Id-\oQ_3)\cdots 
      (\Id-\oQ_{n-1}) ,
  \end{align*}
  and
  \begin{align*}
    \oPi_{\text{even}} \EqDef(\Id-\oQ_2)\cdot(\Id-\oQ_4)
      \cdots (\Id-\oQ_n)
  \end{align*}
  be the projectors onto the ground spaces of the odd and even
  layers of $\oH$ (where we have assumed $n$ to be even), so that
  $\DL(\oH) = \oPi_{\text{even}}\cdot \oPi_{\text{odd}}$. By the converse of the
  DL (\Lem{lem:LD}), for every state $\ket{\psi}$,
  $\norm{\DL(\oH)\ket{\psi}}^2\ge 1-4\bra{\psi}\oH\ket{\psi}$.
  Consequently, for every $\ket{\psi^\perp}$ orthogonal to the
  ground space of $\oH$, we have
  $\norm{\DL(\oH)\ket{\psi^\perp}}^2\ge 1-4\ospg$, where $\ospg$ is
  the spectral gap of $\oH$. Thus
  \begin{align}
    \ospg \ge \frac{1}{4} -
      \frac{1}{4}\max_{\ket{\psi^\perp}\in\Vgs^\perp}
        \norm{{\DL(\oH)\ket{\psi^\perp}}}^2 ,\label{eq:spbound}
  \end{align}
  and to prove the theorem it will suffice to provide an upper bound
  on $\max_{\ket{\psi^\perp}\in\Vgs^\perp}
  \norm{{\DL(\oH)\ket{\psi^\perp}}}^2$. We achieve this by using the
  DL on the \emph{original} Hamiltonian $H$. To that aim, let
  $\Pi_{\text{even}}, \Pi_{\text{odd}}$ be the projectors onto the ground spaces
  of the even and odd layers in $H$ respectively. We first show the
  following:
  \begin{claim}
  \label{claim:lightcone}
    For every $0\le q\le \lfloor \frac{r}{4}\rfloor$,
    \begin{align*}
      \oPi_{\text{even}}\cdot\oPi_{\text{odd}}
        = \oPi_{\text{even}}\big(
          \Pi_{\text{even}}\Pi_{\text{odd}}\Pi_{\text{even}}\big)^q\oPi_{\text{odd}} .
    \end{align*}
  \end{claim}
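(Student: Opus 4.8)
The plan is to prove Claim~\ref{claim:lightcone} by induction on $q$, exploiting the locality structure of the groups $S_\alpha$ and the fact that $\oPi_{\text{odd}}$ and $\oPi_{\text{even}}$ are projectors onto the \emph{common} ground space of all original terms $Q_i$ supported inside the respective unions of blocks. The base case $q=0$ is the definition $\oPi_{\text{even}}\cdot\oPi_{\text{odd}} = \oPi_{\text{even}}\cdot\oPi_{\text{odd}}$. For the inductive step the key observation is that $\oPi_{\text{odd}}$ already projects onto the joint ground space of every $Q_i$ whose support lies within a single odd block $S_1, S_3, S_5,\ldots$; since consecutive odd blocks $S_{2j-1}, S_{2j+1}$ overlap the even block $S_{2j}$ in exactly $r/2$ particles and together cover it, the only original terms $Q_i$ \emph{not} automatically killed by $\oPi_{\text{odd}}$ are those straddling the boundary between an even block and the complement of the odd blocks — and I expect to show that, in fact, $\oPi_{\text{odd}}$ annihilates all $Q_i$ internal to odd blocks, so that $\Pi_{\text{odd}}\oPi_{\text{odd}} = \oPi_{\text{odd}}$ (up to the terms that matter), and symmetrically $\oPi_{\text{even}}\Pi_{\text{even}} = \oPi_{\text{even}}$.

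Concretely, I would argue as follows. Write $\Pi_{\text{odd}} = \prod_{i\in T_{\text{odd}}}(\Id - Q_i)$ and split $T_{\text{odd}}$ into those $Q_i$ supported \emph{inside} some odd block $S_\alpha$ and those supported inside some even block. Because $\oPi_{\text{odd}}$ is (a product of) the projectors $\Id - \oQ_\alpha$ for odd $\alpha$, and $\oQ_\alpha$'s ground space is by definition contained in the kernel of every $Q_i$ internal to $S_\alpha$, applying $\Pi_{\text{odd}}$ on the right of $\oPi_{\text{odd}}$ leaves only the factors $(\Id-Q_i)$ with $Q_i$ internal to even blocks acting nontrivially. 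The crucial point — this is where the bound $q\le\lfloor r/4\rfloor$ enters — is a ``light cone'' / finite-speed-of-propagation argument: the terms $Q_i$ internal to even block $S_{2j}$ that are near the \emph{center} of $S_{2j}$ are far (distance $\gtrsim r/4$) from the boundary of $S_{2j}$, hence far from where $\oQ_{2j}$ can fail to commute with them, and after at most $q$ alternations of $\Pi_{\text{even}}\Pi_{\text{odd}}$ the operators that have been ``touched'' still lie within distance $2q \le r/2$ of the center and are therefore still inside $S_{2j}$; I would make this precise by checking that the extra $Q_i$-factors introduced by one more conjugation $\Pi_{\text{even}}(\cdots)\Pi_{\text{odd}}$ are all absorbed when sandwiched by $\oPi_{\text{even}}$ on the left and $\oPi_{\text{odd}}$ on the right, since such $Q_i$ is internal to an odd block (absorbed by $\oPi_{\text{odd}}$ on the right) or internal to an even block (absorbed by $\oPi_{\text{even}}$ on the left). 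Iterating this commutation/absorption bookkeeping $q$ times upgrades the identity from exponent $q$ to exponent $q+1$.

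The main obstacle I anticipate is the careful geometric bookkeeping of exactly which original terms $Q_i$ survive after each layer is applied, and verifying that none of them ``escapes'' the block $S_\alpha$ in which it started before the allotted $q \le \lfloor r/4\rfloor$ steps are used up. One has to track supports through the alternating product $\big(\Pi_{\text{even}}\Pi_{\text{odd}}\Pi_{\text{even}}\big)^q$ and argue that, because each $\Pi$-layer is a product of \emph{commuting} projectors acting on a bounded neighborhood, conjugating a local $Q_i$ spreads its effective support by only $O(1)$ sites per layer, so after $q$ layers everything relevant to block $S_{2j}$ still sits inside $S_{2j}$ and hence is killed by $\oPi_{\text{even}}$ (or, for odd blocks, by $\oPi_{\text{odd}}$). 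A clean way to organize this is to prove the slightly stronger auxiliary identity $\Pi_{\text{odd}}\,\oPi_{\text{odd}} = \oPi_{\text{odd}}$ and $\oPi_{\text{even}}\,\Pi_{\text{even}} = \oPi_{\text{even}}$ \emph{restricted to} the relevant support, then feed these into a short induction; the rest is routine manipulation of products of projectors, and I would not expect any genuinely hard inequality — just the bookkeeping of light cones against the width $r$ of the blocks.
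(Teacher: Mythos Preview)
Your intuition is right—absorption of local $P_i=\Id-Q_i$ by the coarse-grained projectors, together with a light-cone constraint tied to the width $r$—but the concrete scheme you propose does not close.

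First, the auxiliary identity $\oPi_{\text{even}}\,\Pi_{\text{even}}=\oPi_{\text{even}}$ is false in general. Take $r=8$: the even block boundaries sit at sites $4,12,20,\ldots$, and the even-layer term $Q_{12}$ acts on $\{12,13\}$, which lies in neither $S_2=\{5,\ldots,12\}$ nor $S_4=\{13,\ldots,20\}$. Hence $\Id-Q_{12}$ is not absorbed by any even $\oP_\alpha$, and $\oPi_{\text{even}}\Pi_{\text{even}}\neq\oPi_{\text{even}}$. (The companion identity $\Pi_{\text{odd}}\,\oPi_{\text{odd}}=\oPi_{\text{odd}}$ does hold for even $r$, but one absorption direction is not enough.) Second, even granting both identities, the bare induction on $q$ does not go through: after stripping the outermost $\Pi_{\text{even}}$ on the left and $\Pi_{\text{odd}}$ on the right you are left with an inner $\Pi_{\text{odd}}$ (respectively $\Pi_{\text{even}}$) that neither commutes past $(\Pi_{\text{even}}\Pi_{\text{odd}}\Pi_{\text{even}})^q$ nor is absorbed by the coarse projector on the far side. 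Your hedge ``restricted to the relevant support'' is pointing at the real issue but does not yet supply the missing mechanism.

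The paper avoids induction entirely and works one coarse block at a time. Since every $P_i$ with support in $S_\alpha$ satisfies $P_i\oP_\alpha=\oP_\alpha$, one can pull from each $\oP_\alpha$ an entire \emph{triangular} stack of $P_i$'s (a pyramid of height $r/2$ whose $k$-th row contains the $P_i$ supported at distance $\ge k-1$ from the boundary of $S_\alpha$); e.g.
\[
  \oP_1 \,=\, P_{r/2}\,(P_{r/2-1}P_{r/2+1})\cdots(P_2\cdots P_{r-2})(P_1\cdots P_{r-1})\,\oP_1 .
\]
Pull such triangles to the \emph{left} of every odd $\oP_\alpha$ and to the \emph{right} of every even $\oP_\alpha$. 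Because consecutive odd and even blocks overlap in exactly $r/2$ sites, the right-pointing triangles from $\oPi_{\text{even}}$ and the left-pointing triangles from $\oPi_{\text{odd}}$ interlock like teeth and assemble into \emph{complete} alternating layers $\Pi_{\text{even}}\Pi_{\text{odd}}\Pi_{\text{even}}\cdots$ in the middle, for as many repetitions as the triangle height allows—precisely $q\le\lfloor r/4\rfloor$. This is where the bound on $q$ enters, as a geometric fit rather than as a propagation estimate. No commutation of $P_i$'s past non-commuting operators is ever needed, and the problematic ``boundary'' terms like $Q_{12}$ above are supplied by the \emph{other} parity's triangles rather than absorbed.
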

  
  \begin{proof}
    For every local term in the original Hamiltonian, define
    $P_i\EqDef \Id-Q_i$ so that $\Pi_{\text{even}}$ and $\Pi_{\text{odd}}$ are
    products of $P_i$ terms. The main observation required is that
    from every coarse-grained $\oP_{\alpha}$ we can ``pull'' a
    light-cone of $P_i$ projectors either to its left or to
    its right. Suppose for instance that $\alpha=1$ and $r$ is even.
    Then by definition $P_i \oP_1 = \oP_1$ for $i=1,3,\ldots,r-1$,
    so $\oP_1 = (P_1\cdots P_{r-1})\oP_1$; more generally, 
    \begin{align*}
      \oP_1 = P_{r/2} (P_{r/2-1} P_{r/2+1}) 
        \cdots (P_2\cdots P_{r-2})(P_1\cdots P_{r-1})\oP_1 ,
    \end{align*}
    and a similar argument applies to different values of $\alpha$
    and odd $r$ as well. Pulling such light-cones from the left of
    $\oPi_{\text{odd}}$ and from the right of $\oPi_{\text{even}}$, the projectors
    can be arranged in layers to form the product
    $\big(\Pi_{\text{even}}\Pi_{\text{odd}}\Pi_{\text{even}}\big)^q$; this is
    demonstrated in \Fig{fig:coarse-grain} for $r=8$ and $q=2$.
  \end{proof}
  
  \begin{figure}[!ht]
    \centering
    \includegraphics[width=0.45\textwidth]{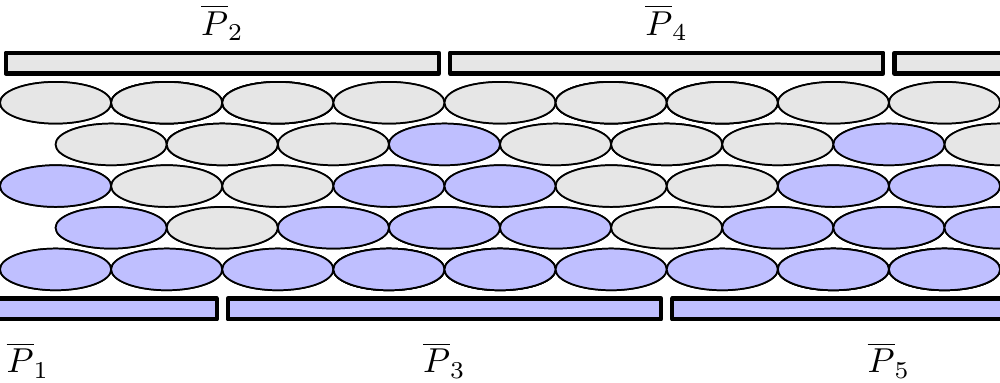}    
    \caption{Pulling out local projectors from coarse-grained
      projectors in a system with $r=8$. 
      Small ellipses are local projectors $P_i$. Wide
      rectangles are coarse-grained $\oP_\alpha$.
        \label{fig:coarse-grain} }
  \end{figure}
  
  Notice that $(\Pi_{\text{even}}\Pi_{\text{odd}}\Pi_{\text{even}}\big)^q =
  \big(\DL(H)^\dagger \DL(H)\big)^q$, so that applying the DL on $H$
  we may conclude that for any $\ket{\psi^\perp}\in \Vgs^\perp$, 
  \begin{align*}
    \bnorm{\oPi_{\text{even}}\cdot\oPi_{\text{odd}}\ket{\psi^\perp}}^2
      \le \left(\frac{1}{\spg/4+1}\right)^{2
          \lfloor\frac{r}{4}\rfloor}
      = 1-\bOmega{r\spg}.
  \end{align*}
  Together with~\eqref{eq:spbound} this is already sufficient to
  obtain a lower bound on the spectral gap of $\oH$. To improve the
  bound to the quadratic dependence on $\spg$ claimed in the theorem,
  we follow an idea from \Ref{ref:GH2015-DLAmp} of using the
  Chebyshev polynomial to boost the effect of the DL.  For the sake
  of completeness, we repeat the argument in detail.
  
  Let $A\EqDef\DL(H)^\dagger \DL(H)=\Pi_{\text{even}}\Pi_{\text{odd}}\Pi_{\text{even}}$.
  Using Claim~\ref{claim:lightcone}, for any polynomial $P_q$ of
  degree $q\le\lfloor\frac{r}{4}\rfloor$ such that $P_q(1)=1$, it
  holds that $\oPi_{\text{even}}\cdot\oPi_{\text{odd}} = \oPi_{\text{even}}\cdot
  P_q(A)\cdot \oPi_{\text{odd}}$.  By definition, for any
  $\ket{\psi^\perp}\in \Vgs^\perp$, we have
  $\oPi_{\text{odd}}\ket{\psi^\perp}\in\Vgs^\perp$ (to see this, multiply
  from the left by any ground state of $\oH$). Using that
  $\norm{\oPi_{\text{even}}}\le 1$, we conclude
  \begin{align}
  \nonumber
    &\max_{\ket{\psi^\perp}} 
      \bnorm{\oPi_{\text{even}}\cdot \oPi_{\text{odd}}\ket{\psi^\perp}} \\
   &\ \ \le \max_{\ket{\psi^\perp}} 
     \bnorm{\oPi_{\text{even}}\cdot
        P_q(A)\cdot\oPi_{\text{odd}}\ket{\psi^\perp}} \nonumber\\
    &\ \ \le \max_{\ket{\psi^\perp}} 
      \bnorm{P_q(A)\ket{\psi^\perp}} .
  \label{eq:spbound2}
  \end{align}
  Our goal is therefore to find a polynomial $P_q(x)$ that would
  minimize the RHS of the above inequality.  Since $A$ is Hermitian,
  we may expand $\ket{\psi^\perp}$ in a basis of eigenstates of $A$
  as $\ket{\psi^\perp}=\sum_\mu \psi_\mu \ket{\mu}$. By definition,
  $0 \le A \le \Id$, and so its eigenvalues are in the range
  $[0,1]$. The $\mu=1$ eigenvalue corresponds to the ground space of
  $H$, and since $A= \DL(H)^\dagger \DL(H)$, it follows from the DL
  that all other eigenvalues of $A$ are upper bounded by
  $h\EqDef\frac{1}{\spg/4+1}$. We therefore look for a polynomial
  $P_q(x)$ with $q\le\lfloor\frac{r}{4}\rfloor$ such that $P_q(1)=1$
  and $|P(x)|$ is minimal for $x\in [0,h]$. Following the approach
  of the AGSP-based area-law proofs\cc{ref:ALV2012-AL,
  ref:AKLV2013-AL}, we choose $P_q$ to be a rescaled Chebyshev
  polynomial of degree $q$ of the first kind. The exact construction
  is summarized in Lemma~\ref{lem:chebyshev} given at the end of
  this section. Substituting $h=\frac{1}{\spg/4+1}$ in the lemma and
  noticing that as $\spg\le g+1=3$ (see the discussion following
  \Cor{cor:DL} for a justification), it follows that
  $1-h=1-\frac{1}{\spg/4+1}\ge \spg/8$, and consequently for every
  $x\in[0,h]$, 
  \begin{align}
    |P_q(x)| \le 2e^{-q\sqrt{\spg/2}} .
  \end{align}
  Therefore, $\norm{P_q(A)\ket{\psi^\perp}}\le 2e^{-q\sqrt{\spg/2}}$
  for every $\ket{\psi^\perp}\in\Vgs^\perp$, and
  combining~\eqref{eq:spbound} and~\eqref{eq:spbound2},
  \begin{align*}
    \ospg \ge \frac{1}{4} - e^{-2q\sqrt{\spg/2}} .
  \end{align*}
  Finally, the theorem is proved by choosing
  $q=\lfloor\frac{r}{4}\rfloor\ge \frac{r}{4}-1$.
\end{proof}

\begin{lemma}
\label{lem:chebyshev} 
  Let $0<h<1$, and let $T_q(x)$ be the Chebyshev polynomial of the
  first kind of degree $q$. Define 
  \begin{align*}
    \tilde{P}_q(x) &\EqDef T_q(2\frac{x}{h}-1) , &
    P_q(x)&\EqDef \tilde{P}_q(x)/\tilde{P}_q(1) .
  \end{align*}  
  Then $P_q(1)=1$ and for any $x\in[0,h]$ it holds that
  $|P_q(x)|\le 2e^{-2q\sqrt{1-h}}$.
\end{lemma}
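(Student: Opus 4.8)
The plan is to reduce everything to two standard facts about the Chebyshev polynomial $T_q$: that $|T_q(y)|\le 1$ for $y\in[-1,1]$, and that $T_q(y) = \cosh(q\,\mathrm{arccosh}\,y)$ grows exponentially for $y>1$, more precisely $T_q(y)\ge \tfrac12 (y+\sqrt{y^2-1})^q \ge \tfrac12 y^q$ for $y\ge 1$. The affine map $y = 2x/h - 1$ sends the interval $[0,h]$ bijectively onto $[-1,1]$, so for $x\in[0,h]$ we immediately get $|\tilde P_q(x)| = |T_q(2x/h-1)| \le 1$. It therefore only remains to lower bound the normalizing constant $\tilde P_q(1) = T_q(2/h-1)$, since $|P_q(x)| = |\tilde P_q(x)|/\tilde P_q(1) \le 1/T_q(2/h-1)$ for all $x\in[0,h]$, and of course $P_q(1)=1$ by construction.

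Next I would set $y_0 \EqDef 2/h - 1 > 1$ (using $0<h<1$) and apply the exponential lower bound: $T_q(y_0) \ge \tfrac12\big(y_0 + \sqrt{y_0^2-1}\big)^q$. A short computation gives $y_0^2 - 1 = (2/h-1)^2 - 1 = 4(1-h)/h^2$, so $\sqrt{y_0^2-1} = 2\sqrt{1-h}/h$ and hence $y_0 + \sqrt{y_0^2-1} = (2 - h + 2\sqrt{1-h})/h$. The cleanest way to finish is to note that $2 - h + 2\sqrt{1-h} \ge h\,e^{2\sqrt{1-h}}$ for $h\in(0,1)$, which would yield $y_0 + \sqrt{y_0^2-1} \ge e^{2\sqrt{1-h}}$ and therefore $T_q(y_0) \ge \tfrac12 e^{2q\sqrt{1-h}}$, giving exactly $|P_q(x)| \le 2e^{-2q\sqrt{1-h}}$ as claimed.

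The one nonroutine inequality is the elementary bound $2 - h + 2\sqrt{1-h} \ge h\,e^{2\sqrt{1-h}}$; I expect this to be the main (modest) obstacle, though it is really just a one-variable calculus check. A clean route is the substitution $u \EqDef \sqrt{1-h}\in(0,1)$, so $h = 1-u^2$ and the claim becomes $1 + u^2 + 2u \ge (1-u^2)e^{2u}$, i.e. $(1+u)^2 \ge (1-u)(1+u)e^{2u}$, i.e. $(1+u) \ge (1-u)e^{2u}$, i.e. $e^{2u}(1-u) \le 1+u$. Since $e^{2u} \le \frac{1+u}{1-u}$ is equivalent to $2u \le \ln(1+u) - \ln(1-u) = 2\sum_{k\ge 0} u^{2k+1}/(2k+1)$, which manifestly holds term by term for $u\in[0,1)$, the inequality follows. (Alternatively one can invoke $\mathrm{arccosh}(y_0) \ge 2\sqrt{1-h}$ directly; this is the same estimate in disguise.) Assembling these pieces — the affine change of variables, $|T_q|\le 1$ on $[-1,1]$, the exponential growth bound for $T_q$ past $1$, and the elementary inequality — completes the proof.
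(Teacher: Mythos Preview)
Your proof is correct and follows essentially the same route as the paper: both use the affine map $x\mapsto 2x/h-1$ to get $|\tilde P_q(x)|\le 1$ on $[0,h]$, and both lower bound $\tilde P_q(1)=T_q(2/h-1)$ by $\tfrac12 e^{2q\sqrt{1-h}}$. The only difference is that the paper invokes the ready-made Chebyshev growth estimate $|T_q(y)|\ge \tfrac12\exp\big(2q\sqrt{(|y|-1)/(|y|+1)}\big)$ (noting that $(y_0-1)/(y_0+1)=1-h$), whereas you start from $T_q(y_0)\ge \tfrac12(y_0+\sqrt{y_0^2-1})^q$ and supply the equivalent elementary inequality $(1+u)/(1-u)\ge e^{2u}$ yourself; as you note, these are the same estimate in disguise.
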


\begin{proof}
  $P_q(1)=1$ holds by definition.  Using the well-known properties
  of the Chebyshev polynomial (see, for example, Lemma~4.1 in
  \Ref{ref:AKLV2013-AL}),
  \begin{align*}
    |T_q(x)|&\le 1, & &\text{for $|x|\le 1$} ,\\
    |T_q(x)| &\ge \frac{1}{2}\exp{
      \left(2q\sqrt{(|x|-1)/(|x|+1)}\right)} ,
      & &\text{for $|x|>1$}  ,
  \end{align*}
  it is easy to see that $|\tilde{P}_q(1)| \ge
  \frac{1}{2}e^{2q\sqrt{1-h}}$, and therefore for $x\in[0,h]$ we
  have $|P_q(x)|\le 2e^{-2q\sqrt{1-h}}$.
\end{proof}

%%%%%%%%%%%%%%%%%%%%%%%%%%%%%%%%%%%%%%%%%%%%%%%%%%%%%%%%%%%%%%%%
\section{Summary}

We have provided a short proof of the DL which tightens its bound
and generalizes it to arbitrary orderings of the local projectors.
Using an explicit example, we showed that the new bound is optimal
in its dependence on $g$ when $\epsilon_\phi\to 0$, up to constant
factors. In addition, we have shown how the lemma can be combined
with a converse bound to prove that by coarse graining a
frustration-free Hamiltonian with a gap $\spg>0$ to a length scale
$\bigO{\spg^{-1/2}}$, one obtains a Hamiltonian with a constant
spectral gap. It would be interesting to see if, by using the
converse to the DL, one can apply the DL to slightly frustrated
systems with a constant gap in a controlled manner. If this can be
done, it would extend the applicability DL to much broader set of
problems, which may benefit from its simplicity with respect to
other techniques.  

\begin{acknowledgments}
  We thank Zeph Landau for many insightful discussions, and Mark
  Wilde for bringing \Ref{ref:gao2015quantum} to our attention. We
  also thank an anonymous referee for pointing out minor
  imprecisions in an earlier draft of this paper. T.V. was partially
  supported by the IQIM, an NSF Physics Frontiers Center (NFS Grant
  No.~PHY-1125565) with support of the Gordon and Betty Moore
  Foundation (GBMF-12500028). A.A. was supported by Core grants of
  Centre for Quantum Technologies, Singapore. Research at the Centre
  for Quantum Technologies is funded by the Singapore Ministry of
  Education and the National Research Foundation Singapore, also
  through the Tier 3 Grant random numbers from quantum processes.
\end{acknowledgments}

\bibliographystyle{apsrev4-1}
\bibliography{DLD}

\end{document}